\def\ot{\otimes}
\def\D{\textsf{D}}\def\S{\textsf{S}}\def\T{\textsf{T}}
\newcommand{\out}[2]{| #1\rangle\langle #2 |}
\newcommand{\trans}{{\scriptstyle\mathsf{T}}}
\newcommand{\innerm}[3]{\langle #1 | #2 | #3 \rangle}
\newcommand{\pa}[1]{(#1)}
\newcommand{\Pa}[1]{\left(#1\right)}
\newcommand{\set}[1]{\{#1\}}
\newcommand{\ket}[1]{|#1\rangle}
\def\Jamiolkowski{J}
\newcommand{\jam}[1]{\Jamiolkowski\pa{#1}}
\DeclareMathOperator{\vectorize}{vec}
\newcommand{\col}[1]{\vectorize\pa{#1}}
\newcommand{\row}[1]{\vectorize\pa{#1}^{\dagger}}
\DeclareMathOperator{\trace}{Tr}
\newcommand{\ptr}[2]{\trace_{#1}\pa{#2}}
\newcommand{\Ptr}[2]{\trace_{#1}\Pa{#2}}
\newcommand{\tr}[1]{\ptr{}{#1}}
\newcommand{\Tr}[1]{\Ptr{}{#1}}
\newcommand{\tinyspace}{\mspace{1mu}}
\newcommand{\abs}[1]{|\tinyspace#1\tinyspace|}
\newcommand{\fontmapset}{\mathbf} 
\newcommand{\mset}[2]{\fontmapset{#1}\pa{#2}}
\newcommand{\lin}[1]{\mset{L}{#1}}
\newcommand{\identity}{\mathbbm{1}}
\newcommand{\Complex}{\mathbb{C}}
\def\cH{\mathcal{H}}
\def\cK{\mathcal{K}}
\newtheorem{thrm}{Theorem}[section]
\newtheorem{lem}[thrm]{Lemma}
\newtheorem{prop}[thrm]{Proposition}
\newtheorem{cor}[thrm]{Corollary}
\theoremstyle{definition}
\newtheorem{remark}[thrm]{Remark}
\numberwithin{equation}{section}
\author{Lin Zhang}
\address{Department of Mathematics\\
Zhejiang University\\Hangzhou\\
People's Republic of China}
\email{godyalin@163.com, linyz@zju.edu.cn}
\keywords{Quantum channel; minimal output entropy; map entropy; Renyi entropy}
\begin{document}

\title[Remark on Entropic characterization of quantum operations]{Remark on Entropic characterization \\of\\ quantum operations}\maketitle
\begin{abstract}
In the present paper, the reduction of some proofs in \cite{Roga1} is presented. An entropic inequality for quantum state and bi-stochastic CP super-operators is conjectured.
\end{abstract}


\section{Introduction}


For the notations, readers are refereed to \cite{Roga1}. So-called quantum operations are completely positive (CP) and trace-preserving (TP) linear super-operator $\Phi$, it is also called quantum channel, stochastic CP super-operator, by which the decoherence induced in an $d$-level quantum system $\cH_d$ are described by the map entropy $\S^{\mathrm{map}}(\Phi)$. When stochastic CP super-operator is unit-preserving, it is called bi-stochastic. In particular, depolarizing channels own an important role, for instance, Roga and his colleagues in \cite{Roga1} obtained a result which states that \emph{among all quantum channels with a given minimal Renyi output entropy of order two the depolarizing channel has the smallest map entropy Renyi entropy}. The present notes aims to give another proof, based on a theorem in \cite{Pedersen}. For convenience, it is listed as follows:


\begin{lem}(\cite{Pedersen})\label{average}
For any linear operator $M$ on an $n$-dimensional complex Hilbert space, the uniform average of $|\innerm{\psi}{M}{\psi}|^2$
over state vectors $\ket{\psi}$ on the unit sphere $S^{2n-1}$ in $\Complex^n$ is given by\\ \indent
$\int_{S^{2n-1}}|\innerm{\psi}{M}{\psi}|^2 d\mu(\psi)=\frac{1}{n(n+1)}[\tr{MM^\dagger}+|\tr{M}|^2]$,\\
where $d\mu$ is the normalized measure on the sphere.
\end{lem}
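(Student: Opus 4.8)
The plan is to reduce the scalar average to a single matrix identity on the tensor square $\Complex^n \ot \Complex^n$, namely the second-moment (``twirl'') integral of the rank-one projector $\out{\psi}{\psi}$, whose closed form follows from Schur's lemma. First I would rewrite the integrand as a trace. Since $\innerm{\psi}{M}{\psi}$ is a scalar and $\overline{\innerm{\psi}{M}{\psi}} = \innerm{\psi}{M^\dagger}{\psi}$,
\[
|\innerm{\psi}{M}{\psi}|^2 = \innerm{\psi}{M}{\psi}\,\innerm{\psi}{M^\dagger}{\psi} = \tr{\Pa{M\ot M^\dagger}\Pa{\out{\psi}{\psi}\ot\out{\psi}{\psi}}}.
\]
Integrating over the sphere and pulling the (state-independent) operator $M\ot M^\dagger$ outside the integral, the whole problem collapses to evaluating $\int_{S^{2n-1}}\Pa{\out{\psi}{\psi}}^{\ot 2}\,d\mu(\psi)$.

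Next I would introduce the flip (swap) operator $F$ on $\Complex^n\ot\Complex^n$, determined by $F\Pa{\ket{x}\ot\ket{y}} = \ket{y}\ot\ket{x}$, and record the two elementary identities $\tr{A\ot B} = \tr{A}\tr{B}$ and $\tr{\Pa{A\ot B}F} = \tr{AB}$. The key step is the twirl formula
\[
\int_{S^{2n-1}}\Pa{\out{\psi}{\psi}}^{\ot 2}\,d\mu(\psi) = \frac{1}{n(n+1)}\Pa{\identity + F}.
\]
To justify it, observe that the left-hand side is invariant under conjugation by $U\ot U$ for every unitary $U$, because $d\mu$ is unitarily invariant and $\out{U\psi}{U\psi} = U\out{\psi}{\psi}U^\dagger$; moreover each integrand is supported on the symmetric subspace of $\Complex^n\ot\Complex^n$. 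Since $U\ot U$ acts irreducibly on that subspace, Schur's lemma forces the integral to be a scalar multiple of the projector $\Pi_{\mathrm{sym}} = \tfrac12\Pa{\identity + F}$ onto it. The scalar is then pinned down by taking traces, using $\tr{\Pa{\out{\psi}{\psi}}^{\ot 2}} = \inner{\psi}{\psi}^2 = 1$ together with $\tr{\Pi_{\mathrm{sym}}} = \binom{n+1}{2} = \tfrac{n(n+1)}{2}$.

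Substituting the twirl formula into the traced integrand and applying the two trace identities yields
\[
\int_{S^{2n-1}}|\innerm{\psi}{M}{\psi}|^2\,d\mu(\psi) = \frac{1}{n(n+1)}\Br{\tr{M\ot M^\dagger} + \tr{\Pa{M\ot M^\dagger}F}} = \frac{1}{n(n+1)}\Br{|\tr{M}|^2 + \tr{MM^\dagger}},
\]
which is the claimed formula. I expect the main obstacle to be the rigorous justification of the twirl identity, in particular verifying the irreducibility needed to apply Schur's lemma. An alternative that sidesteps the representation theory is the direct route: expand $|\innerm{\psi}{M}{\psi}|^2$ in coordinates and establish the fourth-moment identity $\int \bar\psi_i\psi_j\bar\psi_l\psi_k\,d\mu(\psi) = \frac{1}{n(n+1)}\Pa{\delta_{ij}\delta_{kl} + \delta_{ik}\delta_{jl}}$, either by a Gaussian (Wick) representation $\psi = z/\norm{z}$ with $z$ standard complex Gaussian, or by fixing the two allowed index patterns from permutation symmetry and normalization; contracting this tensor against $M_{ij}\overline{M_{kl}}$ reproduces the same two terms.
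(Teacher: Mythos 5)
Your proof is correct. There is, however, no internal proof to compare it against: the paper states this lemma without proof, importing it verbatim from the cited reference \cite{Pedersen}, so your argument supplies a self-contained derivation that the paper itself omits.

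On the merits of your route: the reduction $|\innerm{\psi}{M}{\psi}|^2=\tr{\Pa{M\ot M^\dagger}\pa{\out{\psi}{\psi}}^{\ot 2}}$ is exact, and the twirl identity
\[
\int_{S^{2n-1}}\pa{\out{\psi}{\psi}}^{\ot 2}\,d\mu(\psi)=\frac{1}{n(n+1)}\pa{\identity+F}
\]
is justified precisely as you say: the integral commutes with every $U\ot U$ by unitary invariance of $\mu$, it is supported on the symmetric subspace because $\ket{\psi}\ot\ket{\psi}$ is, and the restriction of $U\mapsto U\ot U$ to the symmetric subspace is irreducible (the symmetric square of the defining representation of the unitary group --- a standard fact, so the appeal to Schur's lemma is legitimate). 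The constant is correctly pinned down by $\tr{\pa{\out{\psi}{\psi}}^{\ot 2}}=1$ and $\tr{\Pi_{\mathrm{sym}}}=n(n+1)/2$, and the concluding identities $\tr{M\ot M^\dagger}=|\tr{M}|^2$ and $\tr{\Pa{M\ot M^\dagger}F}=\tr{MM^\dagger}$ (your $F$ is the paper's swap operator $S$) are correct. Your fallback via the fourth-moment identity $\int\bar\psi_i\psi_j\bar\psi_l\psi_k\,d\mu(\psi)=\frac{1}{n(n+1)}\pa{\delta_{ij}\delta_{kl}+\delta_{ik}\delta_{jl}}$, contracted against $M_{ij}\overline{M_{kl}}$, is equally valid and more elementary, at the cost of a Wick-type computation; either version would serve, and including one of them would make the paper's use of Lemma \ref{average} self-contained.
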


\section{Main Results}

\begin{prop}\label{prop:1}
If $\S_2^{\mathrm{min}}(\Lambda_n)=\S_2^{\mathrm{min}}(\Phi)$, then $\S_2^{\mathrm{map}}(\Lambda_n)\leqslant\S_2^{\mathrm{map}}(\Phi)$.
\end{prop}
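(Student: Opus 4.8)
The plan is to recast the statement entirely in terms of \emph{purities} and then to apply Lemma~\ref{average}. Fix a Kraus representation $\Phi(\cdot)=\sum_k K_k(\cdot)K_k^\dagger$ on the $n$-dimensional space. A direct computation gives, for a pure input, $\tr{(\Phi(\out{\psi}{\psi}))^2}=\sum_{k,l}|\innerm{\psi}{K_k^\dagger K_l}{\psi}|^2$, so that $\S_2^{\mathrm{min}}(\Phi)=-\log P^{\max}(\Phi)$ with $P^{\max}(\Phi)=\max_{\ket{\psi}}\sum_{k,l}|\innerm{\psi}{K_k^\dagger K_l}{\psi}|^2$. Similarly, since the map entropy is the order-two Renyi entropy of the normalized dynamical matrix, $\S_2^{\mathrm{map}}(\Phi)=-\log P^{\mathrm{map}}(\Phi)$ with $P^{\mathrm{map}}(\Phi)=\frac{1}{n^2}\sum_{k,l}|\tr{K_k^\dagger K_l}|^2$. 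As $-\log$ is decreasing, the desired conclusion $\S_2^{\mathrm{map}}(\Lambda_n)\leqslant\S_2^{\mathrm{map}}(\Phi)$ is equivalent to $P^{\mathrm{map}}(\Lambda_n)\geqslant P^{\mathrm{map}}(\Phi)$, while the hypothesis becomes $P^{\max}(\Lambda_n)=P^{\max}(\Phi)$.

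The decisive step is to bridge these two purities through the spherical average. Applying Lemma~\ref{average} termwise with $M=K_k^\dagger K_l$ shows that the mean output purity $\overline{P}(\Phi)=\int_{S^{2n-1}}\tr{(\Phi(\out{\psi}{\psi}))^2}\,d\mu(\psi)$ equals
\[
\overline{P}(\Phi)=\frac{1}{n(n+1)}\sum_{k,l}\br{\tr{K_k^\dagger K_l K_l^\dagger K_k}+|\tr{K_k^\dagger K_l}|^2}.
\]
Writing $F=\sum_k K_k K_k^\dagger=\Phi(\identity)$, the first sum collapses to $\tr{F^2}$ and the second equals $n^2 P^{\mathrm{map}}(\Phi)$, so $\overline{P}(\Phi)=\frac{1}{n(n+1)}[\tr{F^2}+n^2 P^{\mathrm{map}}(\Phi)]$. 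Since the maximum over pure states dominates the mean, and since trace preservation forces $\tr{F}=n$ and hence $\tr{F^2}\geqslant n$ by Cauchy--Schwarz, I arrive at the master estimate
\[
P^{\max}(\Phi)\ \geqslant\ \overline{P}(\Phi)\ \geqslant\ \frac{1}{n+1}\br{1+n\,P^{\mathrm{map}}(\Phi)}.
\]

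What makes the proposition work is that the depolarizing channel $\Lambda_n$ turns this chain into equalities. It is unital, so $F=\identity$ and $\tr{F^2}=n$, which saturates the Cauchy--Schwarz step; and it is covariant under the full unitary group, so $\tr{(\Lambda_n(\out{\psi}{\psi}))^2}$ does not depend on $\ket{\psi}$, forcing $P^{\max}(\Lambda_n)=\overline{P}(\Lambda_n)$ and saturating the averaging step. Hence $P^{\max}(\Lambda_n)=\frac{1}{n+1}[1+n\,P^{\mathrm{map}}(\Lambda_n)]$ holds with equality. Substituting the hypothesis $P^{\max}(\Phi)=P^{\max}(\Lambda_n)$ into the master estimate for $\Phi$ then gives $1+n\,P^{\mathrm{map}}(\Lambda_n)\geqslant 1+n\,P^{\mathrm{map}}(\Phi)$, i.e.\ $P^{\mathrm{map}}(\Lambda_n)\geqslant P^{\mathrm{map}}(\Phi)$, which is exactly the claim.

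The main obstacle I anticipate is not the computation but the simultaneous verification of the two equality conditions: the argument succeeds only because $\Lambda_n$ saturates \emph{both} the mean-versus-maximum inequality (through unitary covariance) and the bound $\tr{F^2}\geqslant n$ (through unitality). I would therefore spend the most care on justifying that unitary covariance makes the output purity constant over pure states, and on checking that no competing channel with the same minimal output purity can exceed the map purity of $\Lambda_n$, since a generic channel must lose ground on at least one of these two fronts.
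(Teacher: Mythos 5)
Your proof is correct, and its computational engine coincides with the paper's: both arguments apply Lemma~\ref{average} termwise to $M=K_i^\dagger K_j$, both compare the resulting spherical average with the hypothesis on the minimal output purity, and your Cauchy--Schwarz step $\tr{F^2}\geqslant(\tr{F})^2/n$ with $F=\Phi(\identity)$ is literally the paper's inequality $\bigl[\Tr{\sum_i K_iK_i^\dagger}\bigr]^2\leqslant n\sum_{i,j}\tr{K_iK_i^\dagger K_jK_j^\dagger}$ written in condensed form. The one genuine difference is how the depolarizing channel enters. The paper quotes the closed-form relation $\S_2^{\mathrm{map}}(\Lambda_n)=-\log[1-(1+\tfrac1n)\epsilon]$ when $\S_2^{\mathrm{min}}(\Lambda_n)=-\log(1-\epsilon)$ as a known fact from \cite{Roga1} and parametrizes the whole argument by $\epsilon$; you instead \emph{derive} this relation by showing that $\Lambda_n$ saturates both inequalities in your master estimate --- unitary covariance makes the output purity constant over pure states (so the maximum equals the mean), and unitality gives $\tr{F^2}=n$ (equality in Cauchy--Schwarz). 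This buys a self-contained proof that also exposes the structural reason $\Lambda_n$ is extremal, at the cost of a slightly longer write-up; the paper's version is shorter but rests on an external formula. The two verifications you flag as delicate are in fact immediate: transitivity of the unitary group on pure states together with covariance gives constancy of $\tr{(\Lambda_n(\out{\psi}{\psi}))^2}$, and $F=\identity$ trivially forces equality in the trace bound; no separate check on ``competing channels'' is needed, since the master estimate applied to an arbitrary $\Phi$ already handles that.
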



\begin{proof}
Now let\\ \indent
$\S_2^{\mathrm{min}}(\Phi)=\S_2^{\mathrm{min}}(\Lambda_n)=-\log(1-\epsilon)\qquad\forall\epsilon\in(0,1)$.\\
Then\\ \indent
$\S_2^{\mathrm{map}}(\Lambda_n)=-\log[1-(1+\frac1n)\epsilon]$.\\
It suffices to prove $\S_2^{\mathrm{map}}(\Phi)\geqslant-\log[1-(1+\frac1n)\epsilon]$ whenever $\S_2^{\mathrm{min}}(\Phi)=-\log(1-\epsilon)$. This is equivalent to the following statement:\\ \indent
$\tr{[\Phi(\out{\varphi}{\varphi})]^2}\leqslant1-\epsilon \Longrightarrow \tr{[\rho(\Phi)]^2}\leqslant1-(1+\frac1n)\epsilon$,\\
where $\rho(\Phi)=\frac1n \jam{\Phi}$ is the Jamio{\l}kowski state and $\jam{\Phi}=\Phi\ot\identity(\out{\identity}{\identity})$ is the dynamical matrix of quantum channel $\Phi$.
Now by the Kraus decomposition of $\Phi$:\\ \indent
$\Phi(\sigma)=\sum_{i}K_i \sigma K_i^\dagger,\quad\sum_i K_i^\dagger K_i=\identity$.\\
Thus\\ \indent
$\tr{[\Phi(\out{\varphi}{\varphi})]^2}= \sum_{i,j}|\innerm{\varphi}{K_i^\dagger K_j}{\varphi}|^2\leqslant1-\epsilon,\quad \forall \ket{\varphi}$\\
and\\ \indent
$\tr{[\rho_{\Phi}]^2}=\frac{1}{n^2}\sum_{i,j}|\tr{K_i^\dagger K_j}|^2$.\\
It follows from the Lemma \ref{average} that
\begin{eqnarray}
\nonumber&&\int_{S^{2n-1}}\sum_{i,j}|\innerm{\varphi}{K_i^\dagger K_j}{\varphi}|^2 d\mu(\varphi)\leqslant(1-\epsilon)\int_{S^{2n-1}}d\mu(\varphi)=1-\epsilon\\
\nonumber&\Longleftrightarrow&
\sum_{i,j}\int_{S^{2n-1}}|\innerm{\varphi}{K_i^\dagger K_j}{\varphi}|^2 d\mu(\varphi)
\\
\nonumber&=&\frac{1}{n(n+1)}\sum_{i,j}\left[\tr{[K_i^\dagger K_j][K_i^\dagger K_j]^\dagger}+|\tr{K_i^\dagger K_j}|^2\right]
\\&=&\frac{1}{n(n+1)}\sum_{i,j}\left[\tr{K_i K_i^\dagger K_j K_j^\dagger}+|\tr{K_i^\dagger K_j}|^2\right]\leqslant1-\epsilon.\label{Eqn1}
\end{eqnarray}
Therefore it follows from Schwarcz's inequality that
\begin{eqnarray}
\nonumber n^2&=&[\tr{\identity}]^2=[\Tr{\sum_{i} K_i^\dagger K_i}]^2=[\Tr{\sum_{i}K_i K_i^\dagger}]^2\\
\nonumber&\leqslant&\tr{\identity}\Tr{\left[\sum_{i}K_i K_i^\dagger\right]^\dagger\left[\sum_{j}K_j K_j^\dagger\right]}\\
\nonumber&=&n\sum_{i,j}\tr{K_i K_i^\dagger K_j K_j^\dagger}\\
&\Longleftrightarrow& \sum_{i,j}\tr{K_i K_i^\dagger K_j K_j^\dagger}\geqslant n.\label{Eqn2}
\end{eqnarray}
Combining Eq.~(\ref{Eqn1}) with  Eq.~(\ref{Eqn2}) gives that\\ \indent
$\frac{1}{n^2}\sum_{i,j}|\tr{K_i^\dagger K_j}|^2\leqslant1-(1+\frac1n)\epsilon$.
\end{proof}


\begin{remark}
In \cite{Roga1}, Roga \emph{et al.} used the following fact that let $\mu$ be the Haar measure on the unitary group $\mathbf{U}(n)$ and let $A$ be a $n^2\times n^2$ matrix, then
$$\int_{\mathbf{U}(n)}U^{\ot 2}A(U^\dagger)^{\ot 2}d\mu(U)=\left(\frac{\tr{A}}{n^2-1}-
\frac{\tr{AS}}{n(n^2-1)}\right)^2\identity-\left(\frac{\tr{A}}{n(n^2-1)}-\frac{\tr{AS}}{n^2-1}\right)^2 S,$$
where $S$ stands for the swap operator: $S\ket{\mu\nu}=\ket{\nu\mu}$.
\end{remark}


\begin{cor}
If $\S_2^{\mathrm{map}}(\Lambda_n)=\S_2^{\mathrm{map}}(\Phi)$, then $\S_2^{\mathrm{min}}(\Lambda_n)\geqslant\S_2^{\mathrm{min}}(\Phi)$.
\end{cor}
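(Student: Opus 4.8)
The plan is to derive the corollary from Proposition \ref{prop:1} by exploiting the fact that, along the family of depolarizing channels, both the minimal output Renyi entropy of order two and the map Renyi entropy of order two are strictly increasing functions of a single parameter. From the proof of Proposition \ref{prop:1} one reads off the explicit dependence $\S_2^{\mathrm{min}}(\Lambda_n)=-\log(1-\epsilon)$ and $\S_2^{\mathrm{map}}(\Lambda_n)=-\log[1-(1+\frac1n)\epsilon]$, valid for $\epsilon$ in the admissible range (where both arguments of the logarithm stay positive, i.e. $\epsilon\in(0,\frac{n}{n+1})$). Both right-hand sides are strictly increasing in $\epsilon$, so as $\Lambda_n$ runs over the depolarizing family there is an increasing bijection between $\S_2^{\mathrm{min}}$ and $\S_2^{\mathrm{map}}$; in particular, a larger map entropy corresponds to a larger minimal output entropy, and conversely.

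First I would record this monotonicity precisely and note the admissible range of each entropy. Next, given a channel $\Phi$ with $\S_2^{\mathrm{map}}(\Lambda_n)=\S_2^{\mathrm{map}}(\Phi)$, I would introduce an auxiliary depolarizing channel $\Lambda_n'$ (same dimension, parameter $\epsilon'$) chosen so that $\S_2^{\mathrm{min}}(\Lambda_n')=\S_2^{\mathrm{min}}(\Phi)$. Proposition \ref{prop:1}, applied to the pair $(\Lambda_n',\Phi)$, then yields $\S_2^{\mathrm{map}}(\Lambda_n')\leqslant\S_2^{\mathrm{map}}(\Phi)=\S_2^{\mathrm{map}}(\Lambda_n)$.

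Finally I would push this last inequality through the monotonicity of the map entropy in the depolarizing parameter: the bound $\S_2^{\mathrm{map}}(\Lambda_n')\leqslant\S_2^{\mathrm{map}}(\Lambda_n)$ forces $\epsilon'\leqslant\epsilon$, hence $\S_2^{\mathrm{min}}(\Lambda_n')\leqslant\S_2^{\mathrm{min}}(\Lambda_n)$. Substituting $\S_2^{\mathrm{min}}(\Lambda_n')=\S_2^{\mathrm{min}}(\Phi)$ gives $\S_2^{\mathrm{min}}(\Phi)\leqslant\S_2^{\mathrm{min}}(\Lambda_n)$, which is exactly the assertion. Equivalently, the same conclusion follows by contradiction: were $\S_2^{\mathrm{min}}(\Lambda_n)<\S_2^{\mathrm{min}}(\Phi)$, the depolarizing channel matching $\S_2^{\mathrm{min}}(\Phi)$ would have strictly larger map entropy than $\Lambda_n$, contradicting the bound supplied by Proposition \ref{prop:1}.

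The step I expect to be the main obstacle is the existence of the matching channel $\Lambda_n'$, namely verifying that the value $\S_2^{\mathrm{min}}(\Phi)$ actually lies in the range of minimal output entropies attained by the depolarizing family, so that the reparametrization is legitimate. This is a range (surjectivity) check on the explicit formulas above rather than a deep difficulty, but it must be handled to make the comparison argument airtight; one should also confirm that the monotonicity used is genuinely strict so that $\epsilon'\leqslant\epsilon$ can be concluded from the map-entropy inequality without ambiguity.
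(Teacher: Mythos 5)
Your proposal is correct and follows essentially the same route as the paper: the paper likewise combines Proposition \ref{prop:1} with the observation that, along the depolarizing family, $\S_2^{\mathrm{min}}(\Lambda_n)=-\log\bigl(\frac{1+ne^{-\S_2^{\mathrm{map}}(\Lambda_n)}}{n+1}\bigr)$ is a monotonically increasing function of $\S_2^{\mathrm{map}}(\Lambda_n)$, which is exactly your increasing bijection in the parameter $\epsilon$. Your auxiliary-channel argument and the range (surjectivity) check simply make explicit the combination step that the paper compresses into its final sentence.
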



\begin{proof}
Since the function $f(t)=-\log\left(\frac{1+n e^{-t}}{n+1}\right)(n\geqslant2)$ is a contiguous, increasing function, it follows that\\ \indent
$\S_2^{\mathrm{min}}(\Lambda_n)=-\log\left(\frac{1+n e^{-\S_2^{\mathrm{map}}(\Lambda_n)}}{n+1}\right)$;\\
i.e., $\S_2^{\mathrm{min}}(\Lambda_n)$ is a monotonic increasing function of $\S_2^{\mathrm{map}}(\Lambda_n)$. This fact together the above proposition \ref{prop:1} implies that the conclusion.
\end{proof}


It will be useful throughout the present paper to make use of a simple correspondence between the spaces $\lin{\cH,\cK}$ and $\cK\ot\cH$, for given Hilbert spaces $\cH$ and $\cK$. The mapping \\ \indent $\mbox{vec}:\lin{\cH,\cK}\longrightarrow\cK\ot\cH$\\
can be defined to be the linear mapping: \\ \indent$\col{A}=\sum_{m\mu}A_{m\mu}\ket{m}\ot\ket{\mu}=\sum_{m\mu}A_{m\mu}\ket{m\mu}$\\
for every operator $A\in\lin{\cH,\cK}$, where $\set{\ket{m}}$ and $\set{\ket{\mu}}$ are the standard basis for $\cH$ and $\cK$, respectively. When the vec mapping is generalized to multipartite spaces, caution should be given to the bipartite case (multipartite situation similarly). Specifically, for given complex Euclidean spaces $\cH_{A/B}$ and $\cK_{A/B}$,\\ \indent
$\mbox{vec}:\lin{\cH_{A}\ot\cH_{B},\cK_{A}\ot\cK_{B}}\longrightarrow
\cK_{A}\ot\cH_{A}\ot\cK_{B}\ot\cH_{B}$\\
is defined to be the linear mapping that represents a change of bases from the standard basis of $\lin{\cH_{A}\ot\cH_{B},\cK_{A}\ot\cK_{B}}$ to the standard basis of $\cK_{A}\ot\cH_{A}\ot\cK_{B}\ot\cH_{B}$. Concretely, \\ \indent
$\col{\out{m}{n}\ot\out{\mu}{\nu}}:=\ket{mn}\ot\ket{\mu\nu}$, \\ where $\set{\ket{n}}$ is the standard basis for $\cH_{A}$ and $\set{\ket{\nu}}$ is the standard basis for $\cH_{B}$, while $\set{\ket{m}}$ is the standard basis for $\cK_{A}$ and $\set{\ket{\mu}}$ is the standard basis for $\cK_{B}$. The mapping is determined for every operator $X\in\lin{\cH_{A}\ot\cH_{B},\cK_{A}\ot\cK_{B}}$ by linearity. Note that if $X=A\ot B$, where $A\in\lin{\cH_{A},\cK_{A}}$ and $B\in\lin{\cH_{B},\cK_{B}}$, then $\col{A\ot B}=\col{A}\ot\col{B}$.


\begin{prop}\label{product channel}
For any CP super-operators $\Phi$ and $\Psi$, with corresponding their Kraus representations: $\Phi=\sum_{i}Ad_{M_{i}}$ and $\Psi=\sum_{j}Ad_{N_{j}}$, respectively. Then:
\begin{enumerate}
\item $\Phi\ot\Psi=\sum_{i,j}Ad_{M_{i}\ot N_{j}}$, and  $\jam{\Phi\ot\Psi}=\jam{\Phi}\ot \jam{\Psi}$;
\item $\jam{\Phi\circ\Psi}=\Phi\ot\identity(\jam{\Psi})=\identity\ot\Psi^\trans(\jam{\Phi})
    =\Phi\ot\Psi^\trans(\col{\identity}\row{\identity})$.
\end{enumerate}
\end{prop}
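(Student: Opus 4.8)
The plan is to reduce everything to two facts: the Kraus-operator expression for the Jamio\l kowski matrix, $\jam{\Phi}=\sum_i\col{M_i}\row{M_i}$, and the \emph{vec} identity $\col{AXB}=(A\ot B^\trans)\col{X}$, which follows at once from $\col{\out{m}{n}}=\ket{mn}$ by linearity. I would record the first fact at the outset: taking $X=\identity$ and $B=\identity$ in the vec identity gives $\col{M_i}=(M_i\ot\identity)\col{\identity}$, so that $\sum_i\col{M_i}\row{M_i}=\sum_i(M_i\ot\identity)\col{\identity}\row{\identity}(M_i^\dagger\ot\identity)=(\Phi\ot\identity)(\col{\identity}\row{\identity})=\jam{\Phi}$.

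For part (1), the identity $\Phi\ot\Psi=\sum_{i,j}\mathrm{Ad}_{M_i\ot N_j}$ follows on product inputs from $(M_i\ot N_j)(\sigma\ot\tau)(M_i\ot N_j)^\dagger=(M_i\sigma M_i^\dagger)\ot(N_j\tau N_j^\dagger)$, and then on all of $\lin{\cdots}$ by linearity. For the Jamio\l kowski identity I would invoke the bipartite rule $\col{A\ot B}=\col{A}\ot\col{B}$ recorded just above the proposition: it yields $\col{M_i\ot N_j}\row{M_i\ot N_j}=(\col{M_i}\row{M_i})\ot(\col{N_j}\row{N_j})$, and summing over $i,j$ and factoring the double sum gives $\jam{\Phi\ot\Psi}=\jam{\Phi}\ot\jam{\Psi}$.

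For part (2), the composition $\Phi\circ\Psi$ has Kraus operators $M_iN_j$, so $\jam{\Phi\circ\Psi}=\sum_{i,j}\col{M_iN_j}\row{M_iN_j}$. The three asserted forms then come from the three ways of inserting an identity into $M_iN_j$ before applying the vec identity: writing $M_iN_j=M_iN_j\identity$ gives $\col{M_iN_j}=(M_i\ot\identity)\col{N_j}$, so summing over $i$ produces $(\Phi\ot\identity)(\jam{\Psi})$; writing $M_iN_j=\identity M_iN_j$ gives $\col{M_iN_j}=(\identity\ot N_j^\trans)\col{M_i}$, so summing over $j$ produces $(\identity\ot\Psi^\trans)(\jam{\Phi})$; and writing $M_iN_j=M_i\identity N_j$ gives $\col{M_iN_j}=(M_i\ot N_j^\trans)\col{\identity}$, so the joint sum produces $(\Phi\ot\Psi^\trans)(\col{\identity}\row{\identity})$.

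The point requiring care is the transpose map $\Psi^\trans$: it must be read as the super-operator with Kraus operators $N_j^\trans$, i.e. $\Psi^\trans=\sum_j\mathrm{Ad}_{N_j^\trans}$, so that the adjoint factor is $(N_j^\trans)^\dagger=\overline{N_j}$ and the conjugation $(\identity\ot N_j^\trans)\,\cdot\,(\identity\ot\overline{N_j})$ indeed realizes $(\identity\ot\Psi^\trans)$. I expect the only genuine obstacle to be bookkeeping the transposes and conjugations consistently across the three representations; once the vec identity is applied with the correct second-factor transpose, each equality is immediate.
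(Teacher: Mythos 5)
Your proposal is correct, but there is nothing in the paper to compare it against: Proposition~\ref{product channel} is stated without proof (the proof environment that follows it belongs to the next proposition, on additivity of the Renyi map entropy, which merely \emph{cites} this one), with the underlying facts implicitly deferred to the survey \cite{Lin}. Your argument supplies the missing proof in the standard way, reducing everything to $\jam{\Phi}=\sum_i\col{M_i}\row{M_i}$ and the identity $\col{AXB}=(A\ot B^{\trans})\col{X}$, and it is executed soundly. In particular you handle correctly the two points where such a proof can silently go wrong: first, the equality $\jam{\Phi\ot\Psi}=\jam{\Phi}\ot\jam{\Psi}$ in part (1) is only literally true under the paper's reordered bipartite vec convention $\col{A\ot B}=\col{A}\ot\col{B}$ (otherwise it holds only up to a permutation swapping the two middle tensor factors), and you invoke exactly that convention; second, the symbol $\Psi^{\trans}$ must be read as the super-operator with Kraus operators $N_j^{\trans}$, which matches the paper's Lemma~\ref{dualchannel}, and your bookkeeping of the resulting conjugations $(\identity\ot N_j^{\trans})\,\cdot\,(\identity\ot\overline{N_j})$ is right. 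The three insertions of the identity into $\col{M_iN_j}$ giving the three forms in part (2) are exactly the natural mechanism for these equalities.
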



\begin{prop}
Let $\Phi$ and $\Psi$ are CP stochastic super-operators. For any $p\geqslant0$, the Renyi map entropy satisfies the following addition identity:\\ \indent
$\S^{\mathrm{map}}_p(\Phi\ot\Psi)=\S^{\mathrm{map}}_p(\Phi)+\S^{\mathrm{map}}_p(\Psi)$.
\end{prop}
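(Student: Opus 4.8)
The plan is to reduce the claimed additivity to the well-known additivity of the Rényi entropy of a density operator under tensor products, via the multiplicative behaviour of the dynamical matrix recorded in Proposition~\ref{product channel}(1). Recall that the Rényi map entropy of order $p$ is by definition the Rényi entropy of the normalized Jamio{\l}kowski (dynamical) state $\rho(\Phi)=\frac1n\jam{\Phi}$, so that for $p\neq1$
$$\S^{\mathrm{map}}_p(\Phi)=\frac{1}{1-p}\log\tr{\rho(\Phi)^p}.$$
The whole argument therefore hinges on showing that the normalized dynamical state factorizes as $\rho(\Phi\ot\Psi)=\rho(\Phi)\ot\rho(\Psi)$.

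First I would fix the dimensions: suppose $\Phi$ is a channel on $\cH_n$ and $\Psi$ a channel on $\cH_m$, so that $\Phi\ot\Psi$ acts on the $nm$-dimensional space $\cH_n\ot\cH_m$. Since $\Phi$ and $\Psi$ are stochastic, hence trace-preserving, one has $\tr{\jam{\Phi}}=n$ and $\tr{\jam{\Psi}}=m$, which is exactly what makes $\rho(\Phi)$ and $\rho(\Psi)$ genuine states and fixes the normalization constant of $\rho(\Phi\ot\Psi)$ to be $nm$. Invoking Proposition~\ref{product channel}(1), namely $\jam{\Phi\ot\Psi}=\jam{\Phi}\ot\jam{\Psi}$, I obtain
$$\rho(\Phi\ot\Psi)=\frac{1}{nm}\jam{\Phi\ot\Psi}=\Pa{\frac1n\jam{\Phi}}\ot\Pa{\frac1m\jam{\Psi}}=\rho(\Phi)\ot\rho(\Psi),$$
which is the desired factorization.

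With the factorization in hand, additivity follows from the elementary identity $\tr{(\rho\ot\sigma)^p}=\tr{\rho^p\ot\sigma^p}=\tr{\rho^p}\tr{\sigma^p}$ applied to $\rho=\rho(\Phi)$ and $\sigma=\rho(\Psi)$; taking $\frac{1}{1-p}\log$ of both sides turns the product into the sum $\S^{\mathrm{map}}_p(\Phi)+\S^{\mathrm{map}}_p(\Psi)$, as claimed.

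I do not expect a serious obstacle here; the only points requiring care are the accounting of the normalization constants (which is precisely where stochasticity is used) and the degenerate orders. For $p=0$ the claim reads $\log\mathrm{rank}\,\rho(\Phi\ot\Psi)=\log\mathrm{rank}\,\rho(\Phi)+\log\mathrm{rank}\,\rho(\Psi)$, which holds since rank is multiplicative under tensor products, while the case $p=1$ is the additivity of the von Neumann entropy and may be recovered either by a direct spectral computation or by continuity as $p\to1$.
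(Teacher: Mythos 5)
Your proof is correct and follows essentially the same route as the paper's: both reduce the claim to the factorization $\rho(\Phi\ot\Psi)=\rho(\Phi)\ot\rho(\Psi)$ via Proposition~\ref{product channel}(1) and then invoke multiplicativity under tensor products (your $\tr{(\rho\ot\sigma)^p}=\tr{\rho^p}\tr{\sigma^p}$ is the same fact the paper phrases as multiplicativity of the Schatten $p$-norm, since $\frac{p}{1-p}\log\norm{\rho}_p=\frac{1}{1-p}\log\tr{\rho^p}$). Your explicit bookkeeping of the normalization constants and the degenerate orders $p=0,1$ is slightly more careful than the paper's, but the argument is the same.
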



\begin{proof}
By the Lemma \ref{product channel}, $\rho(\Phi\ot\Psi)=\rho(\Phi)\ot\rho(\Psi)$, which implies that $\|\rho(\Phi)\ot\rho(\Psi)\|_p=\|\rho(\Phi)\|_p\|\rho(\Psi)\|_p$ since $\|\cdot\|_p$ is multiplicative.
Now
\begin{eqnarray*}
\S^{\mathrm{map}}_p(\Phi\ot\Psi)&=&\frac{p}{1-p}\log\|\rho(\Phi)\ot\rho(\Psi)\|_p\\
&=&\frac{p}{1-p}\log\|\rho(\Phi)\|_p+\frac{p}{1-p}\log\|\rho(\Psi)\|_p\\
&=&\S^{\mathrm{map}}_p(\Phi)+\S^{\mathrm{map}}_p(\Psi).
\end{eqnarray*}
\end{proof}


\begin{lem}\label{dualchannel}(\cite{Lin})
Let $\Phi$ be CP super-operator with their Kraus operators $\set{K_i}$. The corresponding transpose of $\Phi$ is $\Phi^{\trans}$ for which the Kraus operators $\sec{K_i^{\trans}}$; the dual of $\Phi$ is $\Phi^\dagger$ for which the Kraus operators $\set{K_i^\dagger}$. Then\\ \indent
$\jam{\Phi^\trans}=S\jam{\Phi}S$ and $\jam{\Phi^{\dagger}}=S\jam{\Phi}^{\trans}S$,\\
where $S$ is the swap operator.
\end{lem}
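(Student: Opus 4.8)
The plan is to reduce both identities to the single linear-algebraic fact that the $\vectorize$ map converts a matrix transpose into an application of the swap operator $S$. First I would record the Kraus form of the dynamical matrix. Since $\Phi=\sum_i\mathrm{Ad}_{K_i}$ and $(K_i\ot\identity)\col{\identity}=\col{K_i}$, one has $\jam{\Phi}=\sum_i\col{K_i}\row{K_i}$. By hypothesis the transpose and dual channels have Kraus operators $\set{K_i^\trans}$ and $\set{K_i^\dagger}$, so the same computation gives $\jam{\Phi^\trans}=\sum_i\col{K_i^\trans}\row{K_i^\trans}$ and $\jam{\Phi^\dagger}=\sum_i\col{K_i^\dagger}\row{K_i^\dagger}$. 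Thus the problem is entirely reduced to understanding $\col{K_i^\trans}$ and $\col{K_i^\dagger}$.

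The key auxiliary step I would establish is the pair of vec identities $\col{A^\trans}=S\col{A}$ and $\col{A^\dagger}=S\,\overline{\col{A}}$, valid for every operator $A$, where the bar denotes entrywise complex conjugation. The first follows by comparing coefficients: $\col{A^\trans}=\sum_{m\mu}A_{\mu m}\ket{m\mu}$, and relabelling the summation indices shows this equals $S\col{A}$ because $S\ket{m\mu}=\ket{\mu m}$. The second then follows from the first by writing $A^\dagger=\overline{A^\trans}$ and using that $\vectorize$ commutes with entrywise conjugation, together with the fact that $S$ is a real permutation matrix, so that $S^\dagger=S^\trans=S$.

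With these in hand the transpose identity is a direct substitution: $\jam{\Phi^\trans}=\sum_i\pa{S\col{K_i}}\pa{S\col{K_i}}^\dagger=S\big(\sum_i\col{K_i}\row{K_i}\big)S=S\jam{\Phi}S$, where I use $S^\dagger=S$. For the dual identity the shape of the computation is the same, but the conjugation must be carried through carefully. From $\col{K_i^\dagger}=S\,\overline{\col{K_i}}$ one obtains $\row{K_i^\dagger}=\col{K_i}^\trans S$, so that $\jam{\Phi^\dagger}=S\big(\sum_i\overline{\col{K_i}}\,\col{K_i}^\trans\big)S$; recognizing the bracketed sum as $\big(\sum_i\col{K_i}\row{K_i}\big)^\trans=\jam{\Phi}^\trans$ yields $\jam{\Phi^\dagger}=S\jam{\Phi}^\trans S$.

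The only delicate point, and the step I expect to require the most care, is the conjugate/transpose bookkeeping in the dual case: one must keep $\overline{\col{K_i}}$ distinct from $\col{K_i}$ and correctly apply $\pa{vv^\dagger}^\trans=\overline{v}\,v^\trans$ together with the symmetry and reality of $S$. Once the identity $\col{A^\trans}=S\col{A}$ is secured, the remainder of the argument is routine substitution.
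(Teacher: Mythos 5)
Your proof is correct, but note that there is nothing in the paper to compare it against: the lemma is quoted from the survey \cite{Lin} and no proof is given in this paper at all, so your argument fills a genuine omission rather than duplicating or diverging from an internal proof. Checking it against the paper's conventions ($\col{A}=\sum_{m\mu}A_{m\mu}\ket{m\mu}$, $S\ket{m\mu}=\ket{\mu m}$, $\jam{\Phi}=\Phi\ot\identity(\col{\identity}\row{\identity})$), every step goes through: $(K_i\ot\identity)\col{\identity}=\col{K_i}$ gives $\jam{\Phi}=\sum_i\col{K_i}\row{K_i}$; the index relabelling argument for $\col{A^\trans}=S\col{A}$ is valid, and $\col{A^\dagger}=S\,\overline{\col{A}}$ follows as you say from $A^\dagger=(\overline{A})^\trans$ and the fact that $\vectorize$ commutes with entrywise conjugation. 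The transpose identity is then immediate from $S^\dagger=S$, and in the dual case your bookkeeping is exactly right: $\row{K_i^\dagger}=\bigl(S\,\overline{\col{K_i}}\bigr)^\dagger=\col{K_i}^\trans S$, and
\begin{equation*}
\sum_i\overline{\col{K_i}}\,\col{K_i}^\trans=\sum_i\bigl(\col{K_i}\row{K_i}\bigr)^\trans=\jam{\Phi}^\trans,
\end{equation*}
using $(vv^\dagger)^\trans=\overline{v}\,v^\trans$, which yields $\jam{\Phi^\dagger}=S\jam{\Phi}^\trans S$. One minor observation: the lemma's hypothesis hands you Kraus operators $\set{K_i^\trans}$ and $\set{K_i^\dagger}$ for the transpose and dual maps, so your reduction is complete as stated; had $\Phi^\trans$ and $\Phi^\dagger$ been defined intrinsically, one would also want to note that the prescription ``transpose (resp.\ adjoint) each Kraus operator'' is independent of the chosen Kraus decomposition --- a standard fact, and one that your final formulas make evident, since $S\jam{\Phi}S$ and $S\jam{\Phi}^\trans S$ depend only on $\Phi$.
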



\begin{lem}(\cite{Roga2})
Let $\Phi,\Psi$ be CP bi-stochastic super-operators. Then the dynamical subadditivity inequality is valid:\\ \indent
$\max\set{\S^{\mathrm{map}}(\Phi),\S^{\mathrm{map}}(\Psi)}\leqslant
\S^{\mathrm{map}}(\Phi\circ\Psi)\leqslant\S^{\mathrm{map}}(\Phi)+\S^{\mathrm{map}}(\Psi)$.
\end{lem}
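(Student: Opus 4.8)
The plan is to reduce both halves of the inequality to standard entropic properties applied to the Jamio{\l}kowski state, recalling that the map entropy is the von Neumann entropy $\S^{\mathrm{map}}(\Phi)=\S(\rho(\Phi))$ of $\rho(\Phi)=\frac1n\jam{\Phi}$. The composition formulas of Proposition~\ref{product channel}(2) will carry the lower bound, while a Stinespring dilation together with subadditivity will handle the upper bound. Throughout I write $\Phi=\sum_i\mathrm{Ad}_{M_i}$ and $\Psi=\sum_j\mathrm{Ad}_{N_j}$, so that $\Phi\circ\Psi$ has Kraus operators $\set{M_iN_j}$.

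For the lower bound, I would start from $\rho(\Phi\circ\Psi)=\frac1n\jam{\Phi\circ\Psi}$ and divide Proposition~\ref{product channel}(2) by $n$ to get the two representations $\rho(\Phi\circ\Psi)=(\Phi\ot\identity)(\rho(\Psi))=(\identity\ot\Psi^\trans)(\rho(\Phi))$. The key observation is that both $\Phi\ot\identity$ and $\identity\ot\Psi^\trans$ are \emph{unital} channels: $\Phi$ is unital by bi-stochasticity, and $\Psi^\trans$ is always unital since $\Psi^\trans(\identity)=\sum_j N_j^\trans\overline{N_j}=\overline{\sum_j N_j^\dagger N_j}=\identity$ by trace preservation of $\Psi$. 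I would then invoke that a unital channel never decreases von Neumann entropy, which follows from the data-processing inequality for relative entropy applied to the maximally mixed state: since $\S(\tau\,\|\,\identity/n^2)=2\log n-\S(\tau)$ and a unital $\mathcal N$ fixes $\identity/n^2$, one gets $\S(\mathcal N(\tau))\geq\S(\tau)$. Applying this to the two representations yields $\S(\rho(\Phi\circ\Psi))\geq\S(\rho(\Psi))$ and $\S(\rho(\Phi\circ\Psi))\geq\S(\rho(\Phi))$, i.e. the left inequality.

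For the upper bound I would pass to Stinespring dilations. Using isometries $V_\Psi:\ket{k}\mapsto\sum_j N_j\ket{k}\ot\ket{j}$ and $V_\Phi:\ket{k}\mapsto\sum_i M_i\ket{k}\ot\ket{i}$ with environments $\cE_\Psi,\cE_\Phi$, apply $(V_\Phi V_\Psi\ot\identity)$ to the normalized maximally entangled vector on $\cH\ot\cH_B$ to obtain a pure state $\ket{\chi}$ on $\cH\ot\cE_\Phi\ot\cE_\Psi\ot\cH_B$. A direct computation shows its marginal on the output-and-reference system is $\rho(\Phi\circ\Psi)$, so by purity $\S^{\mathrm{map}}(\Phi\circ\Psi)=\S(\rho_{\cE_\Phi\cE_\Psi})$. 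I would then identify the environment marginals: $\rho_{\cE_\Psi}=\frac1n\sum_{j,j'}\tr{N_jN_{j'}^\dagger}\out{j}{j'}$ using $\sum_i M_i^\dagger M_i=\identity$, which is the environment state of $\Psi$ and so has entropy $\S^{\mathrm{map}}(\Psi)$; and $\rho_{\cE_\Phi}=\frac1n\sum_{i,i'}\tr{M_iM_{i'}^\dagger}\out{i}{i'}$, where bi-stochasticity $\sum_j N_jN_j^\dagger=\identity$ is exactly what collapses the intermediate state to the maximally mixed input and renders this the environment state of $\Phi$, of entropy $\S^{\mathrm{map}}(\Phi)$. Subadditivity $\S(\rho_{\cE_\Phi\cE_\Psi})\leq\S(\rho_{\cE_\Phi})+\S(\rho_{\cE_\Psi})$ then gives the right inequality.

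The step I expect to be most delicate is the lower bound. A naive attempt to compare $\S(\rho_{\cE_\Phi\cE_\Psi})$ with a single environment marginal fails, since joint entropy need not dominate a subsystem entropy and the conditional entropy $\S(\cE_\Phi\mid\cE_\Psi)$ can be negative for correlated environments. The resolution, and the genuine content of the argument, is that composition acts on the Jamio{\l}kowski state as a \emph{unital} channel once one uses the transpose identity of Proposition~\ref{product channel}(2); verifying the automatic unitality of $\Psi^\trans$ and correctly invoking entropy monotonicity under unital maps is exactly where the bi-stochasticity hypothesis does its essential work.
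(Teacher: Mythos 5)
Your proof is correct, and its skeleton coincides with the paper's own. For the lower bound you do exactly what the paper does: write $\rho(\Phi\circ\Psi)$ both as $(\Phi\ot\identity)(\rho(\Psi))$ and as $(\identity\ot\Psi^\trans)(\rho(\Phi))$ via Proposition~\ref{product channel}(2), and invoke monotonicity of von Neumann entropy under unital channels. The paper dismisses this step as ``easily seen,'' whereas you supply the data-processing justification and the correct observation that $\Psi^\trans$ is unital automatically from trace preservation of $\Psi$, so bi-stochasticity is only genuinely needed in the factor $\Phi\ot\identity$. The only real divergence is the upper bound: the paper cites Lindblad's entropic inequality $\S(\Lambda(\rho))\leqslant\S(\rho)+\S(\hat{\sigma}(\rho,\Lambda))$ with $\Lambda=\Phi\ot\identity$ and $\rho=\rho(\Psi)$, together with the check (left to the reader) that the exchange entropy $\S(\hat{\sigma}(\rho(\Psi),\Phi\ot\identity))$ equals $\S^{\mathrm{map}}(\Phi)$; you instead construct the two-environment Stinespring dilation and apply subadditivity to the joint environment state $\rho_{\cE_\Phi\cE_\Psi}$. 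These are the same argument in substance --- dilation plus subadditivity of entropy \emph{is} the standard proof of Lindblad's inequality, and your identification of $\rho_{\cE_\Phi}=\frac1n\sum_{i,i'}\tr{M_iM_{i'}^\dagger}\out{i}{i'}$ via $\sum_j N_jN_j^\dagger=\identity$ is precisely the paper's ``easily checked'' exchange-entropy computation, with unitality of $\Psi$ collapsing the effective input to the maximally mixed state. What your version buys is self-containedness: every use of trace preservation of $\Phi$ (in identifying $\rho_{\cE_\Psi}$, hence the $\S^{\mathrm{map}}(\Psi)$ term) and of unitality of $\Psi$ (in identifying $\rho_{\cE_\Phi}$, hence the $\S^{\mathrm{map}}(\Phi)$ term) is made explicit, and no external entropic inequality beyond subadditivity is needed. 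The paper's version is shorter but leaves both of those verifications, and the unital-monotonicity step, as exercises.
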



\begin{proof} In the present proof, a simple approach is given, which is different from the one in \cite{Roga2}. Specifically, since $\Phi$ is bi-stochastic, it is easily seen that \\ \indent
$\S(\Phi\circ\Psi)=\S(\Phi\ot\identity(\rho_{\Psi}))\geqslant\S(\Psi),
\S(\Phi\circ\Psi)=\S(\identity\ot\Psi^\trans(\rho_{\Phi}))\geqslant\S(\Phi)$.\\
Now from the Lindblad's entropic inequality; i.e.,\\ \indent
$\abs{\S(\hat{\sigma}(\rho,\Lambda))-\S(\rho)}\leqslant
\S(\Lambda(\rho))\leqslant\S(\hat{\sigma}(\rho,\Lambda))+\S(\rho)$,\\
it follows that\\ \indent
$\S(\Phi\ot\identity(\rho_{\Psi}))\leqslant\S(\rho_{\Psi})+
\S(\hat{\sigma}(\rho_{\Psi},\Phi\ot\identity))\Longleftrightarrow
\S(\Phi\circ\Psi)\leqslant\S(\Phi)+\S(\Psi)$, \\
where $\S(\hat{\sigma}(\rho_{\Psi},\Phi\ot\identity))=\S(\Phi)$ can be easily checked.
\end{proof}


\begin{prop}
For any bi-stochastic CP super-operators $\Phi,\Psi$ and any maximally entangled state $\frac{1}{\sqrt{n}}\col{U}$, where $U$ is a $n\times n$ complex unitary matrix, the following inequality for von Neumann entropy holds:\\ \indent
$\max\set{\S^{\mathrm{map}}(\Phi),\S^{\mathrm{map}}(\Psi)}\leqslant
\S((\Phi\ot\Psi)(\col{U}\row{U}))\leqslant\S^{\mathrm{map}}(\Phi)+\S^{\mathrm{map}}(\Psi)$
\end{prop}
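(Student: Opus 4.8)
The plan is to recognize the output state, once normalized, as the Jamio{\l}kowski state of a composition of two bi-stochastic channels, and then to invoke the dynamical subadditivity lemma proved just above. Throughout, write $\Psi(\sigma)=\sum_j N_j\sigma N_j^\dagger$ for a Kraus representation of $\Psi$, and recall the elementary identity $\col{AXB}=(A\ot B^\trans)\col{X}$ associated with the vec mapping defined in the text. Taking $X=\identity$ gives $\col{U}=(\identity\ot U^\trans)\col{\identity}$, hence $\col{U}\row{U}=(\identity\ot U^\trans)\col{\identity}\row{\identity}(\identity\ot\overline{U})$. Since $\Phi$ acts on the first tensor factor and $\Psi$ on the second, applying $\Phi\ot\Psi$ and absorbing $U^\trans$ into the Kraus operators of $\Psi$ shows $(\Phi\ot\Psi)(\col{U}\row{U})=(\Phi\ot(\Psi\circ\cU))(\col{\identity}\row{\identity})$, where $\cU(\sigma)=U^\trans\sigma\overline{U}$ is the unitary channel with single Kraus operator $U^\trans$.

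Next I would introduce $\Theta\defeq(\Psi\circ\cU)^\trans$; since the transpose of a channel is an involution, $\Theta^\trans=\Psi\circ\cU$, and the Kraus operators of $\Theta$ are $\set{UN_j^\trans}$, i.e.\ $\Theta(\sigma)=U\Psi^\trans(\sigma)U^\dagger$. Because $\Psi$ is bi-stochastic so is $\Psi^\trans$, and conjugation by the unitary $U$ preserves bi-stochasticity, so $\Theta$ is again a bi-stochastic CP super-operator. The third equality of Proposition~\ref{product channel}(2), $\jam{\Phi\circ\Theta}=(\Phi\ot\Theta^\trans)(\col{\identity}\row{\identity})=(\Phi\ot(\Psi\circ\cU))(\col{\identity}\row{\identity})$, combined with the previous paragraph yields $(\Phi\ot\Psi)(\col{U}\row{U})=\jam{\Phi\circ\Theta}$; dividing by $n$ gives $\frac1n(\Phi\ot\Psi)(\col{U}\row{U})=\rho(\Phi\circ\Theta)$. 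Thus the middle quantity in the statement is precisely $\S^{\mathrm{map}}(\Phi\circ\Theta)$ evaluated on the normalized maximally entangled input.

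It then remains to check that $\Theta$ has the same map entropy as $\Psi$, which I would argue at the level of the dynamical matrix. The middle equality of Proposition~\ref{product channel}(2) gives $\jam{\Psi\circ\cU}=\identity\ot\cU^\trans(\jam{\Psi})=(\identity\ot U)\jam{\Psi}(\identity\ot U^\dagger)$, which is unitarily equivalent to $\jam{\Psi}$; and Lemma~\ref{dualchannel} gives $\jam{\Theta}=S\jam{\Psi\circ\cU}S$, again unitarily equivalent. Hence $\rho(\Theta)$ and $\rho(\Psi)$ share the same spectrum, so $\S^{\mathrm{map}}(\Theta)=\S^{\mathrm{map}}(\Psi)$.

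Finally, applying the dynamical subadditivity lemma to the two bi-stochastic channels $\Phi$ and $\Theta$ gives $\max\set{\S^{\mathrm{map}}(\Phi),\S^{\mathrm{map}}(\Theta)}\leqslant\S^{\mathrm{map}}(\Phi\circ\Theta)\leqslant\S^{\mathrm{map}}(\Phi)+\S^{\mathrm{map}}(\Theta)$, and substituting $\S^{\mathrm{map}}(\Phi\circ\Theta)=\S((\Phi\ot\Psi)(\col{U}\row{U}))$ and $\S^{\mathrm{map}}(\Theta)=\S^{\mathrm{map}}(\Psi)$ is exactly the asserted chain. The only genuinely delicate point is the bookkeeping that converts the pair $(U,\Psi)$ into the transposed-and-conjugated channel $\Theta$ while simultaneously preserving bi-stochasticity and the map entropy; once the vec identity and Lemma~\ref{dualchannel} are in hand this is routine, and everything else reduces to the already-established subadditivity lemma.
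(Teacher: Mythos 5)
Your proof is correct and follows essentially the same route as the paper's: identify $(\Phi\ot\Psi)(\col{U}\row{U})$ with the dynamical matrix $\jam{\Phi\circ\Theta}$ of a composition of two bi-stochastic channels and then invoke the dynamical subadditivity lemma, exactly as the paper does with $\Theta=\Psi^\trans$ after asserting one may take $U=\identity$. Your version is in fact slightly more complete, since the paper's reduction ``it suffices to show that when $U=\identity$'' is left unjustified, whereas your bookkeeping with $\cU(\sigma)=U^\trans\sigma\overline{U}$ and $\Theta=(\Psi\circ\cU)^\trans$, together with the check that $\Theta$ remains bi-stochastic and has the same map entropy as $\Psi$, supplies exactly that missing step.
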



\begin{proof}
It suffices to show that when $U=\identity$. Thus let $U=\identity$. Clearly, it follows from the Lemma \ref{dualchannel} that $\S^{\mathrm{map}}(\Psi)=\S^{\mathrm{map}}(\Psi^\trans)$ when $\Psi$ are bi-stochastic (Therefore so does $\Psi^\trans$). Since\\ \indent
$(\Phi\ot\Psi)(\col{\identity}\row{\identity})=
(\Phi\circ\Psi^{\trans}\ot\identity)(\col{\identity}\row{\identity})=\jam{\Phi\circ\Psi^\trans}$,\\
which implies that \\ \indent
$\S((\Phi\ot\Psi)(\col{\identity}\row{\identity}))=\S^{\mathrm{map}}(\Phi\circ\Psi^\trans)$.\\
Now under the condition that $\Phi,\Psi$ are bi-stochastic, it is obtained from \cite{Roga2} that\\ \indent
$|\S^{\mathrm{map}}(\Phi)-\S^{\mathrm{map}}(\Psi)|\leqslant\max\set{\S^{\mathrm{map}}(\Phi),\S^{\mathrm{map}}(\Psi)}
\leqslant\S^{\mathrm{map}}(\Phi\circ\Psi^\trans)
\leqslant\S^{\mathrm{map}}(\Phi)+\S^{\mathrm{map}}(\Psi)$.
\end{proof}


\begin{remark}
If the Lindblad's inequality employed, then
\begin{eqnarray*}
|\S^{\mathrm{map}}(\Phi)-\S^{\mathrm{map}}(\Psi)|&=&|\S(\rho(\Phi))-\S(\widehat{\sigma}[\identity\ot\Psi,\rho(\Phi)])|
\leqslant\S((\identity\ot\Psi)(\rho(\Phi)))\\
&\leqslant&\S(\rho(\Phi))+\S(\widehat{\sigma}[\identity\ot\Psi,\rho(\Phi)])
=\S^{\mathrm{map}}(\Phi)+\S^{\mathrm{map}}(\Psi).
\end{eqnarray*}
Thus the result can be generalized to the CP stochastic maps \cite{Roga1}.
\end{remark}


\begin{remark}
There is a \emph{conjecture} which can be stated as follows: if $\Phi,\Psi\in\T(\cH)$ are bi-stochastic CP super-operators, then \\ \indent
$\S(\rho)+\S(\Phi\circ\Psi(\rho))\leqslant\S(\Phi(\rho))+\S(\Psi(\rho))$ \\
for any $\rho\in\D(\cH)$, where $\D(\cH)$ stands for all density matrix acting on a $N$-dimensional Hilbert space $\cH$ and $\T(\cH)$ all linear super-operators from $\lin{\cH}$ to $\lin{\cH}$.
And what is a characterization of the saturation for the above inequality. If this conjecture is correct, then it can be employed to give a simple proof to the strong dynamical subadditivity for bi-stochastic CP super-operators.
\end{remark}



\end{document}